\newlength{\tempheight}
\newlength{\tempwidth}
\newcommand{\rowname}[1]% #1 = text
{\rotatebox{90}{\makebox[\tempheight][c]{#1}}}
\newcommand{\columnname}[1]% #1 = text
{\makebox[\tempwidth][c]{#1}}
\def\BState{\State\hskip-\ALG@thistlm}
\newcounter{example}[section]
\theoremstyle{plain}
 \newtheorem{thm}{Theorem}
 \newtheorem{lemma}{Lemma}
 \newtheorem{corollary}{Corollary}
\theoremstyle{remark}
\def\BibTeX{{\rm B\kern-.05em{\sc i\kern-.025em b}\kern-.08em
    T\kern-.1667em\lower.7ex\hbox{E}\kern-.125emX}}
\begin{document}

\title{Benchmarking Buffer Size in IoT Devices Deploying REST HTTP\\
%{\footnotesize \textsuperscript{*}Note: Sub-titles are not captured in Xplore and
%should not be used}
%\thanks{Identify applicable funding agency here. If none, delete this.}
}

\author{\IEEEauthorblockN{Cao Vien Phung, Mounir Bensalem and Admela Jukan}
\IEEEauthorblockA{Technische Universit\"at Braunschweig, Germany \\
%\textit{name of organization (of Aff.)}\\
%City, Country \\
\{c.phung, mounir.bensalem, a.jukan\}@tu-bs.de}

}

\maketitle

\begin{abstract}
A few potential IoT communication protocols at the application layer have been proposed, including MQTT, CoAP and REST HTTP, with the latter being the protocol of choice for software developers due to its compatibility with the existing systems. We present a theoretical model of the expected buffer size on the REST HTTP client buffer in IoT devices under lossy wireless conditions, and validate the study experimentally. The results show that increasing the buffer size in IoT devices does not always improve  performance in lossy environments, hence demonstrating the importance of benchmarking the buffer size in IoT systems deploying REST HTTP.
\end{abstract}

\section{Introduction}
IoT (Internet of Things) systems use a few communication protocols to connect sensors and actuators with the IoT data and routing hubs, including MQTT (Message Queue Telemetry Transport) \cite{OASIS2014}, CoAP (Constrained Application Protocol) \cite{Shelby2014} and REST HTTP  (Representational State Transfer and Hypertext Transfer Protocol) \cite{Severance2015}. The latter includes two layers, the upper layer being REST and the lower layer HTTP. HTTP is the fundamental client-server model protocol, which is widely used by developers today due to its compatibility with existing network infrastructure \cite{10.1145/3292674}. The client-server model of HTTP protocol is performed by sending a request message to the server by the client, and then returning a corresponding acknowledgement back to the client by the server if that request message was accepted. REST is based on a specific architectural style with the guideline for web services developments to define the interaction among different components, and has been recently combined with HTTP protocol and deployed in IoT-based systems  \cite{10.1145/3292674}.

When using REST HTTP in IoT systems, the analysis of buffer size is critical because it directly impacts the communication performance, whereby larger size buffers are expected to reduce data packet blocking and thus increase transmission reliability. On the other hand, larger buffers can constrain the hardware architecture and physical size of the IoT devices. Selecting an inappropriate buffer length. Hence, there a tradeoff between communication performance and hardware design, requiring proper benchmarking of the buffer size.  Previous work, such as \cite{Heidemann:1997:MPH:268715.268719} focused on some related important aspects such as interaction with underlying transport protocols, HTTP performance over satellite channels  \cite{HansKruse}, and presentation of an approximate analytical model related to underlying transport layer \cite{1258974}. Other works, such as \cite{8400067} analyzed the impact of pipelining on the HTTP latency, while \cite{8756782} and \cite{9149026} analyzed the amount of redundant REST HTTP data in environments with unstable connections, and yet \cite{7226719} focused on the impact of HTTP pipelining. Of special interest is related work \cite{inproceedings} which analyzed the receiving buffer size of IoT edge router (server), based on the analysis of the average TCP (Transmission Control Protocol) congestion window size of all IoT devices. This paper focused on scenarios where data streams arrive from a large number of IoT devices and experience packet losses due to congestion events. 

This paper benchmarks the buffer size in IoT devices deploying REST HTTP, and focuses on the client side. We develop a novel analysis, under the assumption of IoT device availability churns, resulting in intermittent and unreliable communication. These events can cause network volatility with unusually high latency and fully closed connections, therefore retransmissions for timeout events set for RESTful applications are necessary with the aim of fault tolerance \cite{6296043}. Our Markov chain model analyzes this and derives the impact of message losses, caused by unusually high delay from node churns on expected buffer size on the client buffer of IoT devices.  The analytical results are validated by the experiments in a simple testbed with IoT devices experiencing node churn. The results show increasing the buffer size in IoT devices does not always improve  performance in lossy environments, in other words, that does not always decrease the amount of arrival data blocked at the client side, hence demonstrating the importance of proper buffer size benchmarking in IoT systems.
\section{Analytical Model}\label{Analysis}
We assume that IoT client devices experience the so-called node churn,  meaning that a device can appear and disappear from the system, due to changed location, depletion of the battery or loss of wireless connectivity. Node churn can cause variable latency, leading to generating automatically retry request messages due to timeout events set for RESTful applications. In the example shown in Fig. \ref{scenario}, one REST HTTP client, e.g., mobile phone, tries to add resources with POST methods on one REST HTTP server, e.g., edge node, using stop-and-wait mechanism. We assume that the arrival time interval between two POST requests is constant, e.g., each $i^{th}$ arrival request message $q_i$ updated at a time at the client buffer with a constant arrival time interval is $t=1$s, where $q_i$ is generated by the client REST layer using JSON file format, which is often used in IoT standards over HTTP. Before transmitting the JSON file of $q_i$, HTTP layer adds a header to POST method, which contains the HTTP version, the Content-type, and the root of the resource, etc. Request messages might have different sizes depending on the programming technology of the web server \cite{8088251}. We assume that the data formats from the same application are often similar \cite{7746084}, where request messages have the same size. We finally define the \emph{observations} as request message read processes right after either timeout event or arrival of the first request message event when the client buffer is empty; in other words, assuming there always exists at least one request message in the client buffer. Timeout time interval for each RESTful application in this example is set to $T_o=2$s.  Arrival messages are stored at the client REST buffer and only deleted when they are successfully sent and their  corresponding responses are successfully returned, whereby \cite{6296043} gave an example of retransmissions with $5$ times for each message failed; however, we assume unlimited retransmissions. 

 In this example, the client buffer length is fixed to $M=km+1$, defined to be the maximum number of messages that the client can buffer, whereby $k \in \{1,2\}$ is considered and $m=\frac{T_o}{t}\in \mathbb{N}^*$ is the number of arrival request messages updated at the client side after a timeout event, e.g., in Fig. \ref{scenario}, $k=2$ and $m=\frac{T_o}{t}=2$, i.e., $M=5$. We assume that new messages arrived at the client side are ignored when previous messages are stored in the client buffer, and the client can send  successfully  the stored  messages with smooth connections. This can be explained by the fact that the time interval  $t$ between two request messages arrived at the client  is much higher than the transmission time of stored messages. Theoretical and experimental results presented in the next section confirm our assumption.
\begin{figure}[!ht]
\centering
\includegraphics[width=1 \columnwidth]{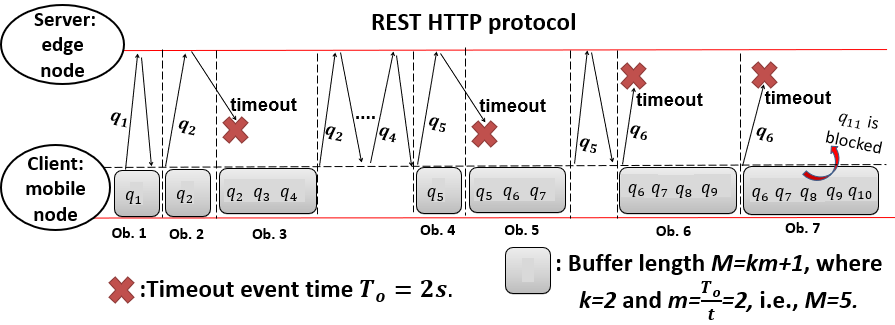}%
  \vspace{-0.2cm}
  \caption{Scenario of REST HTTP  in lossy environment, where Ob. $j$ is observation $j$, $q_i$ denotes request message $i$ and $t=1$s represents arrival time interval of each request message updated at a time at the client side.}
  \vspace{-0.2cm}
  \label{scenario}
\end{figure}

In Fig. \ref{scenario}, request message $q_1$ arriving at Ob. $1$ (Ob. $j$ denotes Observation $j$ for short) is removed from the  buffer when its response is successfully returned on the first try. For $q_2$ arriving at Ob. $2$, assuming its response is lost on the way back to client, the client keeps buffering  this message along with $m=2$ arrival messages of $q_3$ and $q_4$, after timeout event of Ob. $3$. Assume that any timeout event occurs due to network volatility causing fully closed connections. When $q_2$, $q_3$ and $q_4$ are removed, $q_{5}$ stored in the buffer arrives at its arrival time of Ob. $4$, but similarly, assuming its response is lost on the way back to client, the client keeps buffering  it along with $m=2$ arrival request messages of $q_6$ and $q_7$, after timeout event of Ob. $5$. $q_5$ is only removed from the buffer when its response is successfully returned in the second attempt. When  $q_6$ is lost with its first timeout event, $4$ request messages are buffered at Ob. $6$, i.e., from $q_6$ to $q_9$. As $M=5$, at the second timeout event of $q_6$, i.e., Ob. $7$, only $q_{10}$ is added into the buffer and $q_{11}$ is blocked, i.e., $q_6$ to $q_{10}$ are buffered. 
\subsection{State transition probability of Markov chain model} \label{MCM} 
In this subsection, we analyze the state transition probability of Markov chain model, presented in Fig. \ref{markovchain}.
\subsubsection{$k=2$}
     \begin{table}[t!] \vspace{0.2cm}
  \centering
  \caption{List of notations.}
  \label{tab:table1}
  \begin{tabular}{ll}
    \toprule
    Notation & Meaning\\
    \midrule
    %$APL_i$ & Acceptable packet loss rate on the $i^{th}$ main channel.\\
    $p$ & Failed message probability.\\
    $m$ & Arrival messages updated at the client in a timeout event.\\
     $\pi(e)$ & Steady-state probability of $e$ requests.\\
      $E$ & State space.\\
	 $\delta_{u \rightarrow  v}$ & Transition probability from state $u$ to state $v$.\\
	 $M$ & Client buffer length. \\
	 $S$ & Expected client buffer size. \\
	 $T_o$ & Time interval for a timeout event.\\
	 $t$ & Arrival time of each message updated at a time at the client.\\
    \bottomrule
  \end{tabular}\vspace{-0.5cm}
\end{table}
Fig. \ref{markovchain} shows the Markov chain model with $k=2$ reflecting the above defined REST HTTP scenario, whereby $p$ denotes the failed request message probability. The state space with $k=2$ is $E=\{1,m+1, m+2,...,2m+1\}$, including $m+2$ states, where a state is defined to be the number of requests in the client buffer. The notations are summarized in Table \ref{tab:table1}. The transition probability from state $u=1$ to state $v=1$ and $v=m+1$ is:
 \begin{subnumcases}{\delta_{1\rightarrow v}=}
    1-p &, $v=1$ \label{1tov1}\\
p &, $v=m+1$   
\label{1tov2}
\end{subnumcases}
 Eq. \eqref{1tov1} shows state $1$ keeps staying itself when only one message is stored in the client buffer and it is successfully sent with the transition probability $\delta_{1 \rightarrow 1}=1-p$. Since we observe arrival of the first request message event when the client buffer is empty, there is no state of $0$, e.g., in Fig. \ref{scenario} after $q_1$ arriving at Ob. $1$ is successfully sent, only $q_2$ stored in the client buffer arrives at its arrival time of Ob. $2$, i.e.,  state $1$ remains after Ob. $1$ and Ob. $2$ with transition probability $\delta_{1 \rightarrow 1}=1-p$. 
 
 Eq. \eqref{1tov2} is state $u=1$ to directly transits to state $v=m+1$ when only one message is buffered and its timeout event occurs with $\delta_{1 \rightarrow m+1}=p$. In Fig. \ref{markovchain} there are no direct state transitions from state $1$ to state $v \in [m+2,2m+1]$ and that states $v \in (1,m]$ do not exist; the reason is that the buffer increases with the maximum number of $m$ arrival messages at the client after each timeout event, while at least one message is always buffered. For instance, in Fig. \ref{scenario} after timeout event of Ob. $3$, of $3$  messages in the buffer, i.e.,  state $u=1$ at Ob. $2$ directly transits to state $v=m+1=3$ at Ob. $3$ with $\delta_{1 \rightarrow 3}=p$, where $m=2<v=m+1=3<m+2=4$; in other words, there is no state $m=2$ and there are no direct state transitions from state $u=1$ to state $v \in \{4;5\}$.
 
The transition probability from state $u \in [m+1 , 2m+1]$ to state $v=1$ and $v \in [m+1,2m+1]$ is given by
\begin{subnumcases}{\delta_{u\rightarrow v}=}
(1-p)^u &, $v=1$ \label{lasttov1}\\
(1-p)^{u-v+m}  p &, $v \in [m+1,2m]$  \label{lasttov2}\\
\sum_{i=0}^{u+m-v} (1-p)^i p  &, $v=2m+1$
\label{lasttov3}
\end{subnumcases}

 Eq. \eqref{lasttov1} denotes state $u\in [m+1,2m+1]$ directly transits to state $1$ when all messages in the client buffer are successfully sent with $\delta_{u \rightarrow 1}=(1-p)^u$, e.g., $\delta_{m+1 \rightarrow 1}=(1-p)^{m+1}$. Consider the example in Fig. \ref{scenario} again, when $q_2$, $q_3$ and $q_4$ buffering at Ob. $3$ are removed, while $q_5$ stored in the buffer arrives at its arrival time of Ob. $4$, i.e., state $u=m+1=3$ at Ob. $3$ transits to the state $v=1$ at Ob. $4$ with $\delta_{3 \rightarrow 1}=(1-p)^3$.
 
Eq. \eqref{lasttov2} indicates that state $u\in [m+1,2m+1]$ can directly transit to state $v \in [m+1,2m]$, where the oldest $u-v+m$ consecutive request messages are successfully sent with probability $(1-p)^{u-v+m}$, but $(u-v+m+1)^{th}$ message is failed with loss probability $p$. Therefore, the transition probability from state $u$ to state $v$ in this case is $\delta_{u \rightarrow v}=(1-p)^{u-v+m}  p$, e.g., in Fig. \ref{markovchain}, $\delta_{2m \rightarrow m+2}=(1-p)^{2m-2}  p$. As previously discussed, there are no states $v \in (1,m]$; and there are also no state transitions from $u\in [m+1,2m+1]$ to the states $v \in (1,m]$. For example in Fig. \ref{scenario}, one message $q_5$ belonging to Ob. $5$ is successfully removed at the client, and under the timeout of $q_6$ at Ob. $6$, the client buffers $4$ messages, i.e., state $u=m+1=3$ at Ob. $5$ directly transits to state $v=m+2=2m=4$ ($4>m=2$) at Ob. $6$ with $\delta_{3 \rightarrow 4}=(1-p)  p$.

Eq. \eqref{lasttov3} shows that state $u \in [m+1,2m+1]$ directly transits to state $v=2m+1$, whereby transition probability becomes $\delta_{u \rightarrow 2m+1}=\sum_{i=0}^{u-m-1}(1-p)^ip$. Since the buffer length is limited to $M=2m+1$, any state $u$ under this consideration cannot transit to states larger than $2m+1$, which means that a few messages are to be blocked at the client, if the total number of messages need to be stored in more than $2m+1$ of buffer length. For instance, in Fig. \ref{markovchain}, transition probability from state $u=m+2$ to state $v=2m+1$ is $\delta_{m+2 \rightarrow 2m+1}=p+(1-p)p$, i.e., if the oldest message fails or succeeds but the second oldest one fails; then, the state of buffer $u=m+2$ always transits to state $v=2m+1$ because the buffer increases with the maximum number of $m$  arrival request messages after each timeout event, but is always limited to $M=2m+1$. In this example, one arrival request message is blocked at the client after timeout event in case of the oldest request message unsuccessfully sent, because after the timeout event the buffer needs to store $2m+2$ messages, while it is limited to $M=2m+1$, e.g., in Fig. \ref{scenario}, at the timeout event of Ob. $7$, as $M=2m+1=5<6$, $q_{11}$ is blocked, i.e., state $m+2=4$ at Ob. $6$ directly transits to state $2m+1=5$ at Ob. $7$; while no message is blocked in case of the oldest message successfully sent but the second oldest one failed because the client buffer removes the oldest one after its success, i.e., in this case, the client buffer length is sufficient to storing $2m+1$ messages. Note that in case, transition probability from state $u=m+1$ to  state $v=2m+1$ can be also applied by Eq. \eqref{lasttov2}, i.e.,  $\delta_{m+1 \rightarrow 2m+1}=p$.
 \begin{figure}[!ht]
\centering
\includegraphics[width=1\columnwidth]{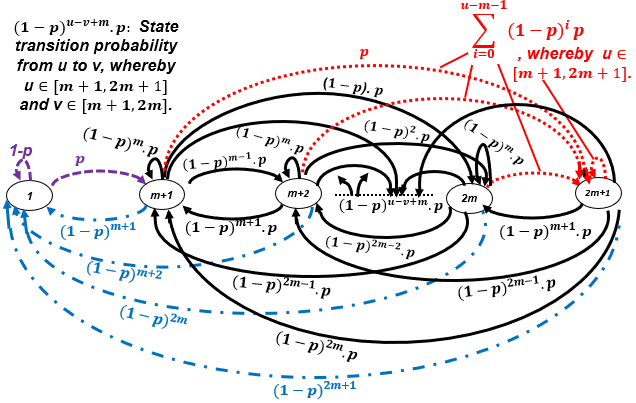}%
  \vspace{-0.05cm}
  \caption{Markov chain model with $k=2$ for the client buffer with REST HTTP in lossy IoT system.}
  \vspace{-0.4cm}
  \label{markovchain}
\end{figure}
\subsubsection{$k=1$} \label{specialcasek1}
With $k=1$, the state space only includes two states to be $E=\{1,m+1\}$. The transition probability from $1$ to itself and $m+1$ are Eq. \eqref{1tov1} and Eq. \eqref{1tov2}, respectively. The transition probability from the state $m+1$ to $1$ and itself are Eq. \eqref{lasttov1} and Eq. \eqref{lasttov3}, respectively. 

\subsection{Lemmas and theorems based on the Markov chain model}
\begin{lemma} \label{fia}
The Markov chain model with $k=1$ and $k=2$  is finite, irreducible and aperiodic.
\end{lemma}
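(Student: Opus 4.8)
The plan is to verify the three properties one at a time, carrying out the argument in full for $k=2$; the $k=1$ chain is a two-state sub-case handled by the same reasoning. Throughout I assume $0<p<1$, the regime in which the model is non-degenerate (the boundary values $p\in\{0,1\}$ are dismissed in a line). \emph{Finiteness} is immediate from the state space: for $k=2$ we have $E=\{1,m+1,m+2,\dots,2m+1\}$, which has $m+2$ elements, and for $k=1$ we have $E=\{1,m+1\}$, with $2$; since $m=T_o/t$ is a fixed positive integer, $|E|<\infty$ in both cases.

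For \emph{irreducibility} I would show that state $1$ communicates with every other state, which suffices because ``communicates with'' is an equivalence relation. By \eqref{1tov2}, $\delta_{1\rightarrow m+1}=p>0$, and by \eqref{lasttov1}, $\delta_{u\rightarrow 1}=(1-p)^{u}>0$ for every $u$ in the upper block $[m+1,2m+1]$, so $1$ and $m+1$ communicate and every upper-block state reaches $1$. To reach the remaining upper-block states from $1$, first move to $m+1$: for $v\in[m+1,2m]$, \eqref{lasttov2} gives $\delta_{m+1\rightarrow v}=(1-p)^{2m+1-v}p>0$ since $2m+1-v\ge 1$, and for $v=2m+1$, \eqref{lasttov3} gives $\delta_{m+1\rightarrow 2m+1}=p>0$ (the summation upper index $u+m-v$ equals $0$ here). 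Hence $E$ is a single communicating class and the chain is irreducible; the $k=1$ chain on $\{1,m+1\}$ is irreducible for the same reason, since $\delta_{1\rightarrow m+1}=p>0$ and $\delta_{m+1\rightarrow 1}=(1-p)^{m+1}>0$.

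\emph{Aperiodicity} then follows at once: by \eqref{1tov1}, $\delta_{1\rightarrow 1}=1-p>0$, so state $1$ carries a self-loop and hence has period $1$; since the chain is irreducible, every state has the same period, namely $1$, so the chain is aperiodic.

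The only step needing genuine care --- and the one I would write out most explicitly --- is confirming that none of the upper-block transition probabilities in \eqref{lasttov2}--\eqref{lasttov3} silently vanish because an exponent or a summation limit is negative. With the index ranges as stated ($v\in[m+1,2m]$, respectively $v=2m+1$), the quantities $u-v+m$ and $u+m-v$ are nonnegative, so every such probability is strictly positive; the rest is routine bookkeeping.
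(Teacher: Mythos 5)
Your proposal is correct and follows essentially the same route as the paper: finiteness from counting states, irreducibility via the positive transitions $1\to m+1$, $u\to 1$, and $m+1\to v$, and aperiodicity from a self-loop (the paper notes every state has one, you use only state $1$ plus the fact that period is a class property --- an immaterial difference). Your explicit restriction to $0<p<1$ and the check that no exponent or summation limit goes negative are slightly more careful than the paper's own writeup, but the substance is identical.
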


\begin{proof}
The state space of these Markov chains is finite as it has $2$ states for $k=1$ and $m+2$ states for $k=2$. Additionally, since all states communicate together: For $k=1$, the two states of $1$ and $m+1$ can be reached together with a non-zero probability after $1$ transition step; while for $k=2$ in Fig. \ref{markovchain}, a state $u \in [m+1,2m+1 ]$ and $u=1$ are reachable from any other state $v \in \{1,m+1,m+2,...,2m+1 \} \setminus \{u\}$ and $v=m+1$, respectively, with a non-zero probability after $1$ transition step, and $u=1$ is reachable from any other state $v \in [m+2,2m+1 ]$ with a non-zero probability after $2$ transition steps; these Markov chains are irreducible. Finally, we prove that these Markov chains are aperiodic by proving that any state $u$ is aperiodic. The period $d(u)$ of a state $u$ is the greatest common denominator of all integers $n>0$, for which the $n-$step transition probability from state $u$ to itself $P_{uu}^{(n)}>0$. As each state $u$ of these Markov chain models always has a self-transition  with a non-zero probability, $d(u)=1$ and consequently state $u$ is aperiodic.
\end{proof}
As the Markov chains of REST HTTP in lossy environments with $k\in\{1,2\}$ satisfy Lemma \ref{fia}, they have a steady-state distribution to which the distribution converges, starting from any initial state. The total probability of all steady-states is $\sum_{e \in E} \pi(e)=1$, whereby $\pi(e)$ is the steady-state probability of $e$ messages in the buffer. With section \ref{MCM}, we build the balance equations of any state defined that the total probability of entering it is equal to the total probability of leaving it.
%\begin{equation}\label{metquatay}
 %\sum_{e \in E} \pi(e)=1 , 
%\end{equation}

\textbf{Case of $k \geq 1$ for the state $1$,} based on the analysis in section \ref{MCM} and using Eq. \eqref{1tov2} and Eq. \eqref{lasttov1},  we have the balance equation Eq. \eqref{pt1} of state $1$ for use case $k \geq 1$:
 \begin{equation}\label{pt1}
\begin{split}
\pi(1) \cdot p= \sum_{i=m+1}^{km + 1} \pi(i) \cdot (1-p)^{i}
 \end{split}
 \end{equation}
\subsubsection{For $k=1$}
 Eq. \eqref{pt1} with $k=1$ is the balance equation of state $1$. Based on the analysis in section \ref{MCM} and using Eq. \eqref{1tov2} and Eq. \eqref{lasttov1}, we have the balance equation Eq. \eqref{pt2} of state $m+1$ for $k=1$: \begin{equation}\label{pt2}
\begin{split}
\pi(m+1) \cdot (1-p)^{m+1}= \pi(1) \cdot p 
 \end{split}
 \end{equation}
\begin{lemma} \label{kequal1}
For $k=1$, the probability of  steady-states is:
\begin{equation}
\pi(1) = \frac{(1-p)^{m+1}}{p+(1-p)^{m+1}} \; ; \; \pi(m+1) = \frac{p}{p+(1-p)^{m+1}}
 \end{equation}
\end{lemma}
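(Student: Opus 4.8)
The plan is to combine the single independent balance equation of the two-state chain with the normalization condition and solve the resulting $2\times 2$ linear system. By Lemma \ref{fia}, the chain with $k=1$ is finite, irreducible and aperiodic, so it admits a unique steady-state distribution, characterized by the balance equations together with $\sum_{e \in E}\pi(e)=1$. Since $E=\{1,m+1\}$ in this case, the normalization condition reads simply $\pi(1)+\pi(m+1)=1$.

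First I would take the balance equation of state $m+1$, namely Eq. \eqref{pt2}, $\pi(m+1)(1-p)^{m+1}=\pi(1)\, p$, and solve it for $\pi(m+1)$ in terms of $\pi(1)$, obtaining $\pi(m+1)=\pi(1)\, p/(1-p)^{m+1}$. It is worth noting that Eq. \eqref{pt1} specialized to $k=1$ gives exactly the same relation, so the two balance equations are linearly dependent; this is the expected rank deficiency for an irreducible chain, and it is precisely why the normalization condition is needed to single out the solution.

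Then I would substitute this expression into $\pi(1)+\pi(m+1)=1$, which yields $\pi(1)\bigl(1+p/(1-p)^{m+1}\bigr)=1$, and hence $\pi(1)=(1-p)^{m+1}/\bigl((1-p)^{m+1}+p\bigr)$. Back-substituting gives $\pi(m+1)=p/\bigl((1-p)^{m+1}+p\bigr)$, which is the claimed pair of formulas. Because $0<p<1$ guarantees the denominator $(1-p)^{m+1}+p$ is strictly positive, the solution is well defined, and its uniqueness is inherited from Lemma \ref{fia}.

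There is essentially no hard step here; the only point worth flagging is the linear dependence of the balance equations, so that the normalization condition is indispensable rather than merely a convenient closing move. Everything else is a short algebraic manipulation.
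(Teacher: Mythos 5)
Your proposal is correct and follows essentially the same route as the paper, whose proof is exactly "use Eq.~\eqref{pt2} together with $\sum_{e\in E}\pi(e)=1$ and solve"; you simply carry out the substitution explicitly. The added remarks on the linear dependence of the balance equations and on uniqueness via Lemma~\ref{fia} are accurate but not a different method.
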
 
 \begin{proof}
Using Eq. (\ref{pt2}) and $\sum_{e \in E} \pi(e)=1$ to solve.
 \end{proof}
 \begin{thm} \label{theorem:ABS}
 The expected client buffer size $S$, for k=1,  is:
 \begin{equation}
 S = \frac{(1-p)^{m+1} + (m+1)p}{p+(1-p)^{m+1}}
 \end{equation}
 \end{thm}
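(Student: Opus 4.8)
The plan is to compute the expected buffer size directly from its definition as a weighted average of the states, using the steady-state distribution already established in Lemma~\ref{kequal1}. By definition, $S = \sum_{e \in E} e\cdot \pi(e)$, and for $k=1$ the state space is simply $E=\{1,\,m+1\}$, so this sum collapses to the two terms $S = 1\cdot\pi(1) + (m+1)\cdot\pi(m+1)$.

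First I would substitute the closed-form expressions from Lemma~\ref{kequal1}, namely $\pi(1) = \frac{(1-p)^{m+1}}{p+(1-p)^{m+1}}$ and $\pi(m+1) = \frac{p}{p+(1-p)^{m+1}}$. Since both probabilities share the common denominator $p+(1-p)^{m+1}$, the two contributions combine immediately into $S = \frac{(1-p)^{m+1} + (m+1)p}{p+(1-p)^{m+1}}$, which is exactly the claimed formula.

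There is essentially no analytical obstacle here: the work has already been done in establishing that the chain is finite, irreducible and aperiodic (Lemma~\ref{fia}), so that a unique steady-state distribution exists, and in solving the balance equation together with the normalization condition (Lemma~\ref{kequal1}). The only thing to be careful about is bookkeeping — confirming that $E$ has exactly the two elements $1$ and $m+1$ for $k=1$ (as stated in Section~\ref{specialcasek1}), and that no other state carries probability mass, so that the expectation truly reduces to the two-term sum. Once that is noted, the result follows by direct substitution and a one-line algebraic simplification.
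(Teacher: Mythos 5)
Your proposal is correct and matches the paper's own proof exactly: the paper also obtains the result by substituting the steady-state probabilities from Lemma~\ref{kequal1} into $S = \sum_{e \in E} e\pi(e)$ over the two-state space $E=\{1,m+1\}$. No differences to note.
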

 \begin{proof}
Using lemma \ref{kequal1} and $S = \sum_{e \in E} e\pi(e)$ to solve.
 \end{proof}
\subsubsection{For $k=2$} Eq. \eqref{pt1} with $k=2$ is the balance equation of state $1$. The balance equation of state $m+1$ for $k=2$ is referred to Appendix \ref{appen1}, and it is simplified by  using Eq. (\ref{pt1}):
\begin{equation}\label{pt2-4}
\pi(m+1) =   \pi(1)\cdot p  (1-p)^{-1}
 \end{equation} 

For all $e \in [m+2,2m]$, the balance equations of state $e$ for the case of $k=2$ can be referred to Appendix \ref{appen2} and Appendix \ref{appen3}, and then they are simplified by using Eq. \eqref{pt1} with $k=2$: 
\begin{equation}\label{pt3-2}
\pi(e)=\pi(1)\cdot  p^{2}(1-p)^{m-e}
 \end{equation}

The balance equation of state $2m+1$ for the case of $k=2$ can be referred to Appendix \ref{appen4}, and then it is simplified by applying Eq. \eqref{pt1} with $k=2$ and $\sum_{e \in E} \pi(e)=1$:
\begin{equation}\label{pt3-6}
\begin{split}
&\pi(2m+1) = 1 - [1+ p(1-p)^{-m}]\pi(1)
 \end{split}
 \end{equation}
 \begin{lemma}\label{lemma:21k2}
Let $r=1-p$ and $\breve{D}=p-r^m[p+(m-1)p^2-r^{m+1}-pr]$, for $k=2$, the probability of  steady-states is:
\begin{equation}\label{eq:k2pi1}
\pi(1) = \frac{r^{2m+1}}{\breve{D}} \; ; \; \pi(m+1) = \frac{pr^{2m}}{\breve{D}} \; ; \; \pi(e)=  \frac{p^2r^{3m-e+1}}{\breve{D}} \; ; \; \notag
\end{equation}
\begin{equation}\label{eq:k2piv}
\begin{split}
\pi(2m+1)= \frac{p[1-r^m(1+(m-1)p)]}{\breve{D}} \; ; \; \forall m+2\leq e\leq 2m \notag
\end{split}
\end{equation}
\end{lemma}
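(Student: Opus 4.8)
The plan is to reduce every steady-state probability to the single unknown $\pi(1)$ and then fix $\pi(1)$ from one last balance equation. The relations Eq.~\eqref{pt2-4}, Eq.~\eqref{pt3-2} and Eq.~\eqref{pt3-6} already give $\pi(m+1)$, each $\pi(e)$ with $m+2\le e\le 2m$, and $\pi(2m+1)$ in terms of $\pi(1)$ (the normalization $\sum_{e\in E}\pi(e)=1$ has been folded into Eq.~\eqref{pt3-6}), so the one remaining independent relation is the state-$1$ balance equation Eq.~\eqref{pt1} specialized to $k=2$, which with $r=1-p$ reads $\pi(1)p=\pi(m+1)r^{m+1}+\sum_{i=m+2}^{2m}\pi(i)r^{i}+\pi(2m+1)r^{2m+1}$.

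First I would substitute Eq.~\eqref{pt2-4}, Eq.~\eqref{pt3-2} and Eq.~\eqref{pt3-6} into that right-hand side. The key simplification is that $\pi(i)r^{i}=\pi(1)p^{2}r^{m}$ is \emph{independent of} $i$ on the range $m+2\le i\le 2m$, so the middle sum collapses to $(m-1)\,\pi(1)p^{2}r^{m}$; similarly $\pi(m+1)r^{m+1}=\pi(1)pr^{m}$ and $\pi(2m+1)r^{2m+1}=r^{2m+1}-\pi(1)r^{2m+1}-\pi(1)pr^{m+1}$. Collecting all the $\pi(1)$ terms on one side yields the linear equation $\pi(1)\big[\,p-pr^{m}-(m-1)p^{2}r^{m}+pr^{m+1}+r^{2m+1}\,\big]=r^{2m+1}$, after which I would check that the bracketed polynomial is exactly $\breve{D}$, giving $\pi(1)=r^{2m+1}/\breve{D}$.

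With $\pi(1)$ known, the remaining formulas follow by back-substitution: Eq.~\eqref{pt2-4} gives $\pi(m+1)=\pi(1)pr^{-1}=pr^{2m}/\breve{D}$; Eq.~\eqref{pt3-2} gives $\pi(e)=\pi(1)p^{2}r^{m-e}=p^{2}r^{3m-e+1}/\breve{D}$; and Eq.~\eqref{pt3-6} gives $\pi(2m+1)=1-(1+pr^{-m})\pi(1)$, where after placing everything over $\breve{D}$ the numerator $\breve{D}-r^{2m+1}-pr^{m+1}$ simplifies to $p\big[1-r^{m}(1+(m-1)p)\big]$.

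I expect the main obstacle to be purely the algebraic bookkeeping: correctly counting the $m-1$ terms in the collapsed middle sum, tracking signs while isolating $\pi(1)$, and recognizing the emerging polynomial as the stated $\breve{D}$. A useful sanity check is that the resulting expressions must satisfy $\sum_{e\in E}\pi(e)=1$ and that their structure parallels the $k=1$ result in Lemma~\ref{kequal1}.
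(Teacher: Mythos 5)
Your proposal is correct and follows exactly the paper's route: the paper's proof is simply ``Using Eqs.~\eqref{pt1}, \eqref{pt2-4}, \eqref{pt3-2} and \eqref{pt3-6} to solve,'' and you carry out precisely that substitution into the state-$1$ balance equation, collapse the $(m-1)$-term sum, and back-substitute. The algebra you sketch (including the identification of the bracketed polynomial with $\breve{D}$ and the simplification of the $\pi(2m+1)$ numerator) checks out.
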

 \begin{proof}
Using Eqs. (\ref{pt1}), (\ref{pt2-4}),  (\ref{pt3-2}) and (\ref{pt3-6}) to solve.
 \end{proof}
 \begin{thm} \label{ABS:theoremk2}
  $r=1-p$, the  expected buffer size $S$ for k=2  is:
\begin{equation}\label{eq:BSk2}
S = \frac{ r^m[r(2r^m-1)-mp^2(2m+1)]+(2m+1)p }{p-r^m[p+(m-1)p^2-r^{m+1}-pr]},
\end{equation}
 \end{thm}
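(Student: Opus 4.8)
The plan is to compute $S$ directly from its definition $S = \sum_{e \in E} e\,\pi(e)$ by substituting the steady-state probabilities from Lemma~\ref{lemma:21k2}. Since for $k=2$ the state space is $E = \{1, m+1, m+2, \dots, 2m+1\}$, I would split
\begin{equation*}
S = \pi(1) + (m+1)\,\pi(m+1) + \sum_{e=m+2}^{2m} e\,\pi(e) + (2m+1)\,\pi(2m+1),
\end{equation*}
and insert $\pi(1) = r^{2m+1}/\breve{D}$, $\pi(m+1) = pr^{2m}/\breve{D}$, $\pi(e) = p^2 r^{3m-e+1}/\breve{D}$ for $m+2\le e\le 2m$, and $\pi(2m+1) = p[1 - r^m(1+(m-1)p)]/\breve{D}$. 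Because every term carries the common factor $1/\breve{D}$, the claim reduces to showing that the resulting numerator equals $r^m[r(2r^m-1) - mp^2(2m+1)] + (2m+1)p$, with the denominator being $\breve{D}$ itself, which already matches the stated form.

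The only non-elementary piece is the finite sum $\Sigma := \sum_{e=m+2}^{2m} e\,p^2 r^{3m-e+1}$. I would reindex it by $i = 2m-e$, so $i$ runs from $0$ to $m-2$, $e = 2m-i$ and $3m-e+1 = m+1+i$, giving $\Sigma = p^2 r^{m+1}\sum_{i=0}^{m-2}(2m-i)\,r^i$, an arithmetic--geometric series (empty, hence zero, when $m=1$). Evaluating it with the standard identities $\sum_{i=0}^{n} r^i = (1-r^{n+1})/p$ and $\sum_{i=0}^{n} i\,r^i = r\big(1-(n+1)r^{n}+n r^{n+1}\big)/p^2$ at $n = m-2$ yields a closed form for $\Sigma$ in terms of $r$, $p$ and $m$ only.

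Finally I would add the four contributions over the common denominator $\breve{D}$, clear the spurious powers of $p$ that the series formulas introduce against the $p^2$ prefactor of $\Sigma$, and collect the terms in $r^m$ and $r^{2m}$; the normalization $\sum_{e\in E}\pi(e)=1$ already absorbed into the expression for $\pi(2m+1)$ should make several intermediate terms cancel, leaving exactly $r^m[r(2r^m-1) - mp^2(2m+1)] + (2m+1)p$ over $\breve{D} = p - r^m[p+(m-1)p^2-r^{m+1}-pr]$, i.e. Eq.~\eqref{eq:BSk2}. The main obstacle is purely the bookkeeping in this last simplification — tracking the $r^m$, $r^{m+1}$ and $r^{2m}$ contributions and checking that the factors of $p$ from differentiating the geometric series recombine with the $p^2$ weight of $\Sigma$ — but there is no conceptual difficulty once the arithmetic--geometric sum has been evaluated.
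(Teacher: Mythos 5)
Your proposal is correct and follows exactly the paper's own (very terse) proof: substitute the steady-state probabilities of Lemma~\ref{lemma:21k2} into $S=\sum_{e\in E}e\,\pi(e)$ and simplify, the only real work being the arithmetic--geometric sum over $e\in[m+2,2m]$, which your reindexing handles properly (and the resulting collection of $r^{m}$, $r^{m+1}$ and $r^{2m+1}$ terms does cancel to the stated numerator over $\breve{D}$).
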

 \begin{proof}
Using $S = \sum_{e \in E} e\pi(e)$ and Lemma \ref{lemma:21k2} to solve.
 \end{proof}
  \begin{corollary} \label{lexuan}
For all $k$, the limit of expected buffer size $S$  is:
 \begin{equation}
\lim_{p \rightarrow 0} S = 1  \; \;  ; \; \; \lim_{p \rightarrow 1} S = km+1
 \end{equation}
\end{corollary}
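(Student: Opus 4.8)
The plan is to obtain both limits by direct substitution into the closed-form expressions for $S$ already established, namely Theorem~\ref{theorem:ABS} for $k=1$ and Theorem~\ref{ABS:theoremk2} for $k=2$ (these being the only two values of $k$ under consideration). In each case $S$ is a rational function of $p$ — writing $r=1-p$ it is a ratio of polynomials in $p$ — so the key preliminary observation is that its denominator does not vanish at the two endpoints $p=0$ and $p=1$; once this is checked, the limit is simply the value of the rational function at that endpoint, by continuity, and no cancellation or indeterminate-form argument is needed.

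For the $p\to 0$ limit I would evaluate numerator and denominator at $p=0$ (equivalently $r=1$). For $k=1$ the denominator $p+(1-p)^{m+1}$ becomes $1$ and the numerator $(1-p)^{m+1}+(m+1)p$ becomes $1$, giving $S\to 1$. For $k=2$ the denominator $\breve{D}=p-r^m[p+(m-1)p^2-r^{m+1}-pr]$ becomes $0-1\cdot[0+0-1-0]=1$, and the numerator $r^m[r(2r^m-1)-mp^2(2m+1)]+(2m+1)p$ becomes $1\cdot[1\cdot(2-1)-0]+0=1$, again giving $S\to 1$.

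For the $p\to 1$ limit I would use $r=1-p\to 0$ together with $m\in\mathbb{N}^*$, so that $r^m\to 0$ and every term carrying a factor $r^m$ drops out. For $k=1$ the denominator then tends to $1$ and the numerator tends to $m+1$, so $S\to m+1=km+1$. For $k=2$ the denominator $\breve{D}\to 1-0=1$ while the numerator $\to 0+(2m+1)\cdot 1=2m+1=km+1$. This establishes $\lim_{p\to 1}S=km+1$ in both cases.

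There is no genuine obstacle here beyond bookkeeping; the only points that require care are (i) confirming $p+(1-p)^{m+1}\neq 0$ and $\breve{D}\neq 0$ at $p=0$ and $p=1$ (both equal $1$ there, as computed above), so that the limits may be passed inside by continuity, and (ii) invoking $m\ge 1$ to guarantee $r^m\to 0$ as $p\to 1$ — if $m=0$ were permitted the $r^m$ terms would survive and the second limit would change. Both conditions hold under the standing assumption $m=T_o/t\in\mathbb{N}^*$.
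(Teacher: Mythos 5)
Your proposal is correct and follows exactly the paper's own (one-line) proof: substitute $p\to 0$ and $p\to 1$ into the closed forms of Theorems~\ref{theorem:ABS} and~\ref{ABS:theoremk2}, checking that the denominators equal $1$ at both endpoints so the limits follow by continuity. Your explicit verification of the non-vanishing denominators and the role of $m\ge 1$ is just a careful fleshing-out of what the paper leaves implicit.
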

 \begin{proof}
We take Theorem \ref{theorem:ABS} and \ref{ABS:theoremk2} to solve, where $p \rightarrow 0;1$.
 \end{proof}
%###############################################
\subsubsection{For $k\geq 3$} With  different values of $k$, we have different functions of expected client buffer size $S$ and probability of steady-states $\pi(e)$. This requires complex and major repeating calculations for each different value of $k$.
\section{Validation and Benchmarking}\label{results}
\subsection{Experimental setup}
The experimental validation setup consists of a simple wired network  for communication between a  client and a server with REST HTTP (Fig. \ref{setupnetwork}). It includes a laptop (Intel(R) Core(TM) i7-3667U CPU @ 2.00GHz) representing our mobile client, a Raspberry Pi 3 Model B+ (64-bit quad-core ARM Cortex-A53 processor @ 1.4GHz and 1 GB of RAM) representing the static server running python flask version 1.0.3, and an access point router used to connect the laptop with the Raspberry Pi. We consider a wired network to obtain a stable implementation and to simplify the setup.
\begin{figure}[ht]
  \centering
  \subfloat[An experimental wired network setup.]{\includegraphics[ width=5cm, height=1.9cm]{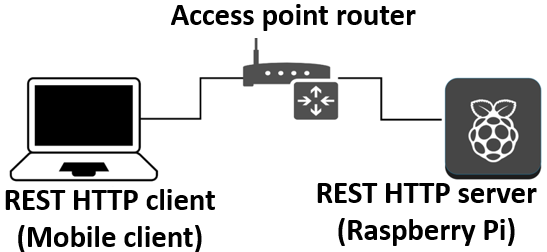}
  \label{setupnetwork}}
  %\hfil \hfil \hfil \hfil \hfil \hfil
  \subfloat[JSON file \cite{9149026}]{\includegraphics[ width=3.5cm, height=1.9cm]{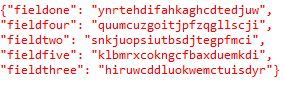}
  \label{json}}
  \vspace{-0.1cm}
    \caption{Experimental setup.}
  \label{experiments}
  \end{figure}
In environments with unreliable connections, we consider stop-and-wait mechanism and the number of retransmissions  unlimited. In order to emulate the connection losses in our wireless scenario, we incorporate  the wireless losses  into the experimental setup using  Algorithm \ref{losscon} \cite{9149026}, where $p$ denotes the message loss probability. This emulation is implemented as follows: We randomly generate a vector of binary numbers, where $20\%$ of the values are ones, e.g., $vector=(0,0,0,1,0)$, where $1$ represents the connection loss until expired timeout and $0$ represents no connection losses; the proposed assumption means that $20\%$ of links are lost. Using the link loss vector  for Algorithm \ref{losscon}, if the client or server meets the value of $1$, then the transmission is paused for a period of time $t^{,}$ to represent a connection loss, e.g., $t^{,}$ is set so that it is equal to the value of $timeout$ $T_o$ set for the client or $t^{,}$ is long enough so that the response message cannot be returned to the client side. Algorithm \ref{losscon} is required to be executed  before dispatching the request and response message. When timeout expires, REST HTTP client forgets the previous connection and opens a new one for retransmission.
\begin{algorithm}\caption{Emulation of link loss until timeout expires \cite{9149026}}
\begin{algorithmic}[1]
\Procedure{Solve}{$Link\_loss\_vector$} 
\If{we meet $1$ from $Link\_loss\_vector$}
\State Pause ($t^{,}$), e.g., $t^{,}=timeout$
\EndIf
\EndProcedure
\end{algorithmic}
\label{losscon}
% \vspace{-0.2cm}
\end{algorithm}
 \begin{figure*}[ht]
  \centering
  \subfloat[Theoretical and experimental results of expected client buffer size $S$ for $k=1$ and $k=2$.]{\includegraphics[ width=4.4cm, height=4.4cm]{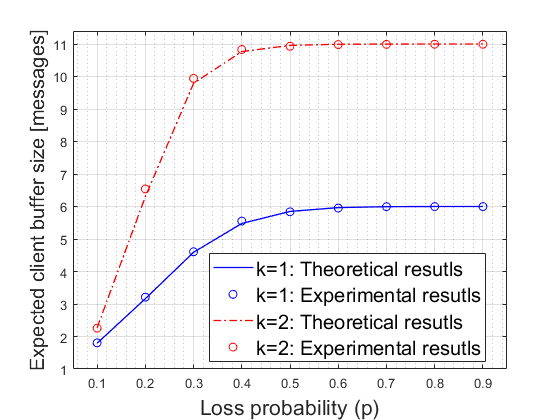}
  \label{Average_buffer_size_m_equals_5}}
  %\hfil \hfil \hfil \hfil \hfil \hfil
  \subfloat[Theoretical results of probability of steady-states for $k=1$.]{\includegraphics[ width=4.4cm, height=4.4cm]{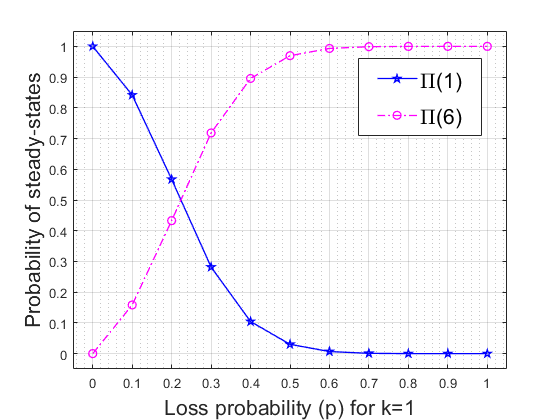}
  \label{probability_Of_steady_states_k_equals_1}}
  \subfloat[Theoretical results of probability of steady-states for $k=2$.]{\includegraphics[ width=4.4cm, height=4.4cm]{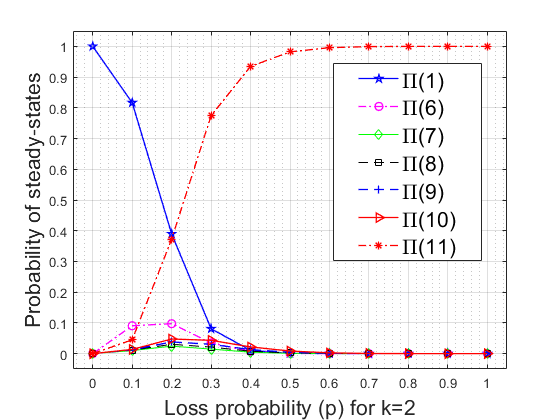}
  \label{probability_Of_steady_states_k_equals_2}}
  %\caption{Results for expected client buffer size $S$ and probability of steady-states for $k=1$ and $k=2$.}
  \subfloat[Experimental results of expected blocked arrival messages at an observation at the client for $k=1;2$.]{\includegraphics[ width=4.4cm, height=4.4cm]{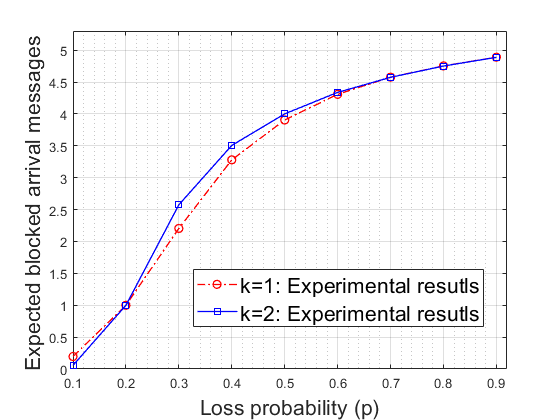}
  \label{messages_blocked}}
  \caption{Theoretical and experimental results for $k=1$ and $k=2$.}
  \label{Average_arrival_messages_blocked_TheoryAndPractical}
  \vspace{-0.5cm}
  \end{figure*}
The client uses python request library and protocol HTTP/$1.1$ to send POST messages to the server. A POST request message with the structure  $message = requests.post('http://192.168.1.2:5000/api/sensors', timeout=T_o, data=POST\_Data)$ is sent to the Raspberry Pi, where $192.168.1.2$ is Raspberry Pi's IP address, timeout event is $T_o=15$ seconds and $POST\_Data$ is the JSON file randomly created, as shown in the example of Fig. \ref{json}. The arrival time interval of each message updated at a time at the client is set $t=3s$. Hence, the number of arrival messages at the client in the timeout interval is $m=\frac{T_o}{t}=5$ messages. Each message has the header  $\{Header: ID\}$ identified for it, e.g., message $q_i$ has $ID=i$. The message length is $199$ bytes. The buffer length is $M=km+1$ messages, where for k=1, the buffer length is $M=6$ messages, and for k=2, we obtain $M=11$ messages. Each time period of 8.3 hours was observed to experimentally measure for each  result point in Fig. \ref{Average_buffer_size_m_equals_5} and Fig. \ref{messages_blocked}.
\subsection{Evaluation}
Fig. \ref{Average_buffer_size_m_equals_5} presents the theoretical and experimental results of expected buffer size $S$ for $k=1$ ($M=6$) and $k=2$ ($M=11$), vs. message loss probability $p$. Theorem \ref{theorem:ABS} and Theorem \ref{ABS:theoremk2} are used to draw the theoretical results for  $k=1$ and $k=2$. For $0 \leq p \leq 0.5$, the expected buffer size $S$ obtained by the theory and experiment increases with increasing $p$ because the buffer stores more messages, when $p$ increases. Nevertheless, when $p\geq 0.5$, $S$ approximately achieves the constant value of $M$ because the buffer is often full in this loss probability interval. The larger the buffer length $M$ is, the higher  the expected buffer size $S$ is because a higher value $M$ has a greater storage capacity. Based on corollary \ref{lexuan}, we have $\lim_{p \rightarrow 0} S=1$ for all $k$, $\lim_{p \rightarrow 1} S=6$ for $k=1$ and $\lim_{p \rightarrow 1} S=11$ for $k=2$. For $p < 0.5$, there is a small difference between the theoretical and experimental results, where the difference increases in terms of the value $k$, e.g., at $p=0.2$, for $k=1$, $S=3.164$ messages for theory and $S=3.211$ messages for experiment (experiment increases by $1.49 \%$) and for $k=2$, $S=6.302$ messages for theory and $S=6.540$ messages for experiment (experiment increases by $3.78 \%$). The reason of the insignificant difference is that: For theory, we assume that new messages arrived at the client are ignored when previous messages are stored in the buffer, and the client can send  successfully all the stored  messages with smooth connections. This can be explained by the fact that the mean arrival time $t=3$s is much higher than the transmission time of stored messages, e.g., in table \ref{timetime} for $10$ messages stored in the  buffer, we only consume $0.161$s to successfully send them with smooth connections, so  the ignored data  in our assumption is insignificant. For $p \geq 0.5$, the theoretical and experimental results are mostly overlapped because almost new messages arrived at the client are from timeout events, which the experimental model fits with the theoretical one. %so the theoretical and experimental results fit together. %however, this time interval measured by experiment is small, e.g., in table \ref{timetime} for $10$  messages stored in the  buffer, we only consume about $0.161$s to successfully send them with smooth wireless connections, while the arrival time interval of each  message updated at a time at the client buffer is up to $t = 3$s, i.e., the ignored data  in our assumption is insignificant.% Also, $p \geq 0.5$, the theoretical and experimental results are mostly overlapped because a high loss probability $p$ causes an very high increase of timeout events, therefore almost arrival request messages are from timeout events.
\begin{table}[]%
\caption{Examples of experimental transmission time for successfully sending  messages stored in the buffer with smooth connections.}
\label{timetime}
\begin{center}
\setlength\tabcolsep{3pt}
\begin{tabular}{|c|c|}
  \hline
Number of messages stored in the client buffer & Total time in seconds \\ \hline
$1$ & 0.034 \\ \hline
$3$ & 0.048 \\ \hline
$5$ & 0.089 \\ \hline
$8$ & 0.122 \\ \hline
$10$ &  0.161 \\ \hline
\end{tabular}
\end{center}
  \vspace{-0.6cm}
\end{table}
% Also, at $p=0.1$ and $p \geq 0.4$, the theoretical and experimental results are mostly overlapped because at $p=0.1$ the connections are often smooth, therefore the buffer does not often contain much data leading to the time interval of solving all request messages quite small, i.e., arrival messages in this interval are insignificant, while with a high $p \geq 0.4$ almost arrival messages are from timeout events.

Fig. \ref{probability_Of_steady_states_k_equals_1} and Fig. \ref{probability_Of_steady_states_k_equals_2} present theoretical results of steady-states probability with buffer length of $k=1$ and $k=2$, respectively, vs. message loss probability $p$. Lemma \ref{kequal1} and Lemma \ref{lemma:21k2}  are used for Fig. \eqref{probability_Of_steady_states_k_equals_1} and Fig. \eqref{probability_Of_steady_states_k_equals_2}, respectively. For the steady-state probability $\pi(1)$ in Fig. \eqref{probability_Of_steady_states_k_equals_1} and Fig. \eqref{probability_Of_steady_states_k_equals_2}, the higher the loss probability $p$ is, the lower it is. For the steady-state probability $\pi(km+1)$, i.e., $\pi(6)$ and $\pi(11)$ in Fig. \eqref{probability_Of_steady_states_k_equals_1} and Fig. \eqref{probability_Of_steady_states_k_equals_2}, respectively, the higher the loss probability $p$ is, the higher it is. For $\pi(e)$ ($e \in [6,km]$) in Fig. \ref{probability_Of_steady_states_k_equals_2}, $\pi(e)$ gradually increases and then tends to gradually decrease to $0$. The reason for those behaviors is that the number of arrival messages stored in the client buffer increases when $p$ increases. %The reason is that the number of arrival messages stored in the client buffer increases when $p$ slowly increases, while the buffer is always full, i.e., $\pi(km+1)=1$, when $p$ is too high.
 
Next, we analyse the steady-state probability $\pi(km+1)$ in Fig. \ref{probability_Of_steady_states_k_equals_1} and Fig. \ref{probability_Of_steady_states_k_equals_2} to understand the impact of the value $k$ on the expected blocked arrival messages $B$ at the client, i.e.,  $\pi(6)$ in Fig. \ref{probability_Of_steady_states_k_equals_1} and $\pi(11)$ in Fig. \ref{probability_Of_steady_states_k_equals_2}. For  $p \in [0.1;0.2]$, we obtain $\pi(6)>\pi(11)$, e.g., when $p=0.1$, we have $\pi(6)=  0.158$ and $\pi(11)=0.045$, this can be explained by the fact that with a higher  value $k$ we have a higher number of states, therefore we obtain a lower value $\pi(km+1)$. For $p \in  [0.6;1]$ we obtain the same value $\pi(km+1)$, e.g., $\pi(6)\approx\pi(11) \approx 1$, because with a very high loss probability $p$, the buffer is always full in this probability interval. In the interval of comparatively high value  $p \in [0.3;0.5]$,  we have $\pi(6) < \pi(11)$, e.g., $\pi(6)=0.718$ and $\pi(11)=0.774$ at $p=0.3$. In this case, the storage capacity of new arrival messages increases with the increase of the value $k$, i.e., increasing transmission chances of messages leads to increasing timeout events in this probability interval, which can increase the number of blocked arrival messages. In Fig. \ref{messages_blocked}, we show experimentally the expected blocked arrival messages $B$ at an observation at the client, whereby the observation is defined in section \ref{Analysis},  $\lim_{p \rightarrow 0} B = 0$, and $\lim_{p \rightarrow 1} B = m=5$ messages, i.e., $5$ messages arrived at the client are blocked in the timeout observation  when $p$ achieves $1$. With $p \in [0.3;0.5]$ in Fig. \ref{messages_blocked}, $B$ of $k=2$ is higher than $B$ of $k=1$, which confirms the theoretical analysis above. Through Fig. \ref{messages_blocked}, we recommend setting $k>1$ if $p<0.2$ and $k=1$ if $p\geq 0.2$ for reducing the blocked data. %of Fig. \ref{probability_Of_steady_states_k_equals_1} and Fig. \ref{probability_Of_steady_states_k_equals_2}.

 Through the analysis above, we can derive an interesting conclusion that a large buffer length for reliable IoT systems in lossy environments does not always yield the best performance. Here, the amount of blocked data can increase, since a large $M$ value can cause more timeout events leading to increasing the number of arrival messages at the buffer, while the buffer is already filled with a large number of messages. This is consistent with discussions showed in \cite{inproceedings} for TCP.
 \section{Conclusion}\label{conclusion}
We benchmarked the buffer size in IoT devices deploying REST HTTP, in theory and experiment. The results showed that a large buffer in IoT devices does not always lead to an improved performance in lossy environments, and in fact could even degrade the performance. The proper benchmaking of the buffer size is hence rather important. The experimental analysis indicated that in order to reduce the blocked data, we should set $k>1$ if $p<0.2$ and $k=1$ if $p\geq 0.2$.
\bibliographystyle{IEEEtran}
\bibliography{nc-rest}

% Generated by IEEEtran.bst, version: 1.14 (2015/08/26)
\begin{thebibliography}{10}
\providecommand{\url}[1]{#1}
\csname url@samestyle\endcsname
\providecommand{\newblock}{\relax}
\providecommand{\bibinfo}[2]{#2}
\providecommand{\BIBentrySTDinterwordspacing}{\spaceskip=0pt\relax}
\providecommand{\BIBentryALTinterwordstretchfactor}{4}
\providecommand{\BIBentryALTinterwordspacing}{\spaceskip=\fontdimen2\font plus
\BIBentryALTinterwordstretchfactor\fontdimen3\font minus
  \fontdimen4\font\relax}
\providecommand{\BIBforeignlanguage}[2]{{%
\expandafter\ifx\csname l@#1\endcsname\relax
\typeout{** WARNING: IEEEtran.bst: No hyphenation pattern has been}%
\typeout{** loaded for the language `#1'. Using the pattern for}%
\typeout{** the default language instead.}%
\else
\language=\csname l@#1\endcsname
\fi
#2}}
\providecommand{\BIBdecl}{\relax}
\BIBdecl

\bibitem{OASIS2014}
OASIS, ``{MQTT Version 3.1.1},'' \emph{OASIS Standard}, p.~81, 2014.

\bibitem{Shelby2014}
Z.~Shelby and C.~Hartke, K.~Bormann, ``{rfc7252, The Constrained Application
  Protocol (CoAP)},'' pp. 1--112, 2014.

\bibitem{Severance2015}
C.~Severance, ``{Roy T. Fielding: Understanding the REST Style.}''
  \emph{Computer}, vol.~48, no.~6, pp. 7--9, 2015.

\bibitem{10.1145/3292674}
\BIBentryALTinterwordspacing
J.~Dizdarevi\'{c}, F.~Carpio, A.~Jukan, and X.~Masip-Bruin, ``A survey of
  communication protocols for internet of things and related challenges of fog
  and cloud computing integration,'' \emph{ACM Comput. Surv.}, vol.~51, no.~6,
  Jan. 2019. [Online]. Available: \url{https://doi.org/10.1145/3292674}
\BIBentrySTDinterwordspacing

\bibitem{Heidemann:1997:MPH:268715.268719}
J.~Heidemann, K.~Obraczka, and J.~Touch, ``Modeling the performance of http
  over several transport protocols,'' \emph{IEEE/ACM Trans. Netw.}, vol.~5,
  no.~5, pp. 616--630, Oct. 1997.

\bibitem{HansKruse}
H.~Kruse, M.~Allman, J.~Griner, and D.~Tran, ``Experimentation and modelling of
  http over satellite channels,'' \emph{Inter. Jour. of Satellite
  Communications}, vol.~16, no.~1, pp. 51--68, 2001.

\bibitem{1258974}
P.~{Vaderna}, E.~{Stromberg}, and T.~{Elteto}, ``Modelling performance of
  http/1.1,'' in \emph{GLOBECOM '03}, vol.~7, Dec 2003, pp. 3969--3973 vol.7.

\bibitem{8400067}
W.~{Bziuk}, C.~V. {Phung}, J.~{Dizdarevic}, and A.~{Jukan}, ``On http
  performance in iot applications: An analysis of latency and throughput,'' in
  \emph{2018 MIPRO}, May 2018, pp. 0350--0355.

\bibitem{8756782}
C.~V. {Phung}, J.~{Dizdarevic}, F.~{Carpio}, and A.~{Jukan}, ``Enhancing rest
  http with random linear network coding in dynamic edge computing
  environments,'' in \emph{2019 MIPRO}, May 2019, pp. 435--440.

\bibitem{9149026}
C.~V. {Phung}, J.~{Dizdarevic}, and A.~{Jukan}, ``An experimental study of
  network coded rest http in dynamic iot systems,'' in \emph{ICC 2020}, pp.
  1--6.

\bibitem{7226719}
K.~{Shuang}, T.~{Zhang}, Z.~{Dong}, and P.~{Xu}, ``Impact of http pipelining
  mechanism for web browsing optimization,'' in \emph{2015 IEEE International
  Conference on Mobile Services}, June 2015, pp. 415--422.

\bibitem{inproceedings}
J.~Khan, M.~Shahzad, and A.~Butt, ``Sizing buffers of iot edge routers,'' 06
  2018, pp. 55--60.

\bibitem{6296043}
J.~{Edstrom} and E.~{Tilevich}, ``Reusable and extensible fault tolerance for
  restful applications,'' in \emph{2012 IEEE 11th International Conference on
  Trust, Security and Privacy in Computing and Communications}, 2012.

\bibitem{8088251}
N.~{Naik}, ``Choice of effective messaging protocols for iot systems: Mqtt,
  coap, amqp and http,'' in \emph{ISSE}, 2017, pp. 1--7.

\bibitem{7746084}
N.~A.~M. Alduais, J.~{Abdullah}, A.~{Jamil}, and L.~{Audah}, ``An efficient
  data collection and dissemination for iot based wsn,'' in \emph{IEMCON},
  2016.

\end{thebibliography}
\appendix
\subsection{Balance equation of state $m+1$ for the case of $k=2$} \label{appen1}
Based on section \ref{MCM} and using Eq. \eqref{1tov2},  Eq. \eqref{lasttov1} and Eq. \eqref{lasttov2}, the balance equation \eqref{pt2-2} of state $m+1$ for $k=2$ is:
  \begin{equation}\label{pt2-2}
\begin{split}
&\pi(m+1) \cdot \left [(1-p)^{m+1}+ p\sum_{i=m+2}^{2m+1}(1-p)^{2m+1-i} \right ]\\&= \pi(1) \cdot p + p\sum_{i=m+2}^{2m +1} \pi(i) \cdot (1-p)^{i-1}
 \end{split}
 \end{equation}
 \subsection{Balance equation of state $e \in  [m+2,2m-1]$ for $k=2$} \label{appen2}
 Based on section \ref{MCM} and using Eq. \eqref{lasttov1}, Eq. \eqref{lasttov2} and Eq. \eqref{lasttov3}, the balance equations \eqref{pt3-1} of state $e$ for $k=2$ is:
\begin{equation}\label{pt3-1}
\begin{split}
&\pi(e) \left [(1-p)^e+p \left (\sum_{i=m+1}^{e-1}(1-p)^{e-i+m} \right. \right. \\& \left. \left. +\sum_{i=e+1}^{2m}(1-p)^{e-i+m}+\sum_{i=0}^{e-m-1}(1-p)^i \right ) \right ]\\&=\sum_{i=m+1}^{e-1}\pi(i)\cdot (1-p)^{i-e+m} p +\sum_{i=e+1}^{2m+1}\pi(i) \cdot (1-p)^{i-e+m}  p
 \end{split}
 \end{equation}
  \subsection{Balance equation of state $2m$ for $k=2$} \label{appen3}
Based on section \ref{MCM} and using Eq. \eqref{lasttov1}, Eq. \eqref{lasttov2} and Eq. \eqref{lasttov3}, the balance equation \eqref{pt3-3} of state $2m$ for $k=2$ is:
 \begin{equation}\label{pt3-3}
\begin{split}
&\pi(2m) \left [(1-p)^{2m}+   p \left (\sum_{i=m+1}^{2m-1}(1-p)^{3m-i} +\sum_{i=0}^{m-1}(1-p)^i \right ) \right ]\\&=\left [\sum_{i=m+1}^{2m-1}\pi(i)\cdot (1-p)^{i-m}  p \right ] + \pi(2m+1) \cdot (1-p)^{m+1}  p
 \end{split}
 \end{equation}
  \subsection{Balance equation of state $2m+1$ for $k=2$} \label{appen4}
    Based on  section \ref{MCM} and using Eq. \eqref{lasttov1}, Eq. \eqref{lasttov2} and Eq. \eqref{lasttov3}, the balance equation \eqref{pt3-5} of state $2m+1$ for $k=2$ is:
\begin{equation}\label{pt3-5}
\begin{split}
&\pi(2m+1)\left [(1-p)^{2m+1}+p \sum_{i=m+1}^{2m}(1-p)^{3m-i+1} \right ]\\&= \sum_{i=m+1}^{2m} \pi(i) \cdot \sum_{j=0}^{i-m-1}(1-p)^j  p
 \end{split}
 \end{equation}  
\end{document}